\newcommand{\bbC}{\mathbb{C}}
\newcommand{\rmd}{\mathrm{d}}
\newcommand{\bbE}{\mathbb{E}}\newcommand{\rme}{\mathrm{e}}
\newcommand{\bbR}{\mathbb{R}}
\newcommand{\bbZ}{\mathbb{Z}}
\newcommand{\sfD}{\mathsf{D}}
\newcommand{\cN}{\mathcal{N}}
\newcommand{\cU}{\mathcal{U}}
\newcommand{\supp}{{\mathsf{supp}}}
\newtheorem{thm}{Theorem}
\newtheorem{prop}{Proposition}
\newtheorem{cor}{Corollary}
\newtheorem{lem}{Lemma}
\newtheorem{rem}{Remark}
\title{ An Improved Lower Bound on Cardinality of Support of the Amplitude-Constrained
AWGN Channel }
\author{
Haiyang~Wang%
\thanks{Haiyang Wang is with the Department of Applied and Computational Mathematics, Yale University, New Haven, CT 06511, USA (e-mail: haiyang.wang1024@gmail.com).}
\and
Luca~Barletta%
\thanks{Luca Barletta is with the Dipartimento di Elettronica, Informazione e Bioingegneria, Politecnico di Milano, 20133 Milano, Italy (e-mail: luca.barletta@polimi.it).}
\and
Alex~Dytso%
\thanks{Alex Dytso is with Qualcomm Flarion Technology, Inc., Bridgewater, NJ 08807, USA (e-mail: odytso2@gmail.com).}
}
\begin{document}

\maketitle

\begin{abstract}
We study the amplitude-constrained additive white Gaussian noise channel. 
It is well known that the capacity-achieving input distribution for this channel is discrete and supported on finitely many points. 
The best known bounds show that the support size of the capacity-achieving distribution is lower-bounded by a term of order $A$ and upper-bounded by a term of order $A^2$, where $A$ denotes the amplitude constraint. 
It was conjectured in~\cite{GaussianBoundsCard} that the linear scaling is optimal. 
In this work, we establish a new lower bound of order $A\sqrt{\log A}$, improving the known bound and ruling out the conjectured linear scaling.

To obtain this result, we quantify the fact that the capacity-achieving output distribution is close to the uniform distribution in relative entropy. 
Next, we introduce a wrapping operation that maps the problem to a compact domain and develop a theory of best approximation of the uniform distribution by finite Gaussian mixtures. 
These approximation bounds are then combined with stability properties of capacity-achieving distributions to yield the final support-size lower bound.
\end{abstract}

\section{Introduction}
We consider an additive white Gaussian noise (AWGN) channel subject to a peak-power constraint. The channel output is given by
\begin{equation}
    Y = X + Z,
\end{equation}
where the input random variable $X$ satisfies the constraint $|X| \le A$ almost surely (a.s.), and $Z$ is a standard normal random variable independent of $X$.

We are interested in the channel capacity, defined as
\begin{equation}
    C(A) = \max_{X:\,|X|\le A} I(X;Y), \label{eq:Capacity_def}
\end{equation}
where $I(X;Y)$ denotes the mutual information between $X$ and $Y$. We denote by $X^*$ a capacity-achieving input random variable and by $Y^*$ the corresponding induced output random variable. In general, both the exact value of the capacity $C(A)$ and the precise structure of the capacity-achieving distribution $X^*$ remain unknown.

This channel was first studied by Shannon in his seminal work~\cite{Shannon:1948}, where initial upper and lower bounds on the capacity were derived. Shannon also showed that, in the low signal-to-noise ratio regime, the peak-power constrained capacity exhibits the same asymptotic behavior as the capacity under an average-power (second-moment) constraint.

Subsequent progress on characterizing both the capacity and the structure of the optimal input distribution was made by Smith~\cite{smith1969Thesis,smith1971information}. In particular, Smith established that the capacity-achieving input distribution is unique, symmetric about the origin, and discrete with finitely many mass points. Moreover, Smith showed that for all $A<0.1$, the capacity-achieving distribution is equiprobable on the two-point set $\{\pm A\}$, which effectively established capacity for this regime.  This result was later sharpened by Sharma and Shamai~\cite{sharma2010transition}, who showed that an equiprobable binary input supported on $\{\pm A\}$ is optimal if and only if $A \le \bar{A} \approx 1.665$, where $\bar{A}$ is characterized as the solution to a certain integral equation. Furthermore, they demonstrated that a ternary input supported on $\{-A,0,A\}$ is capacity-achieving for all $\bar{A} \le A \le \bar{\bar{A}} \approx 2.786$. They also conjectured that, as $A$ increases, the support of the capacity-achieving distribution grows by at most one point at a time, with new mass points always appearing at zero.

Zhang~\cite{zhang1994discrete} studied the asymptotic behavior of the capacity-achieving input and output distributions. He proved bulk asymptotic uniformity of the output on $[-A,A]$~\cite[Thm.~4.16]{zhang1994discrete}, argued that the least favorable prior approaches the Jeffreys prior in the bulk with a discrete boundary pattern near $\pm A$~\cite[p.~56]{zhang1994discrete}, and, for equally weighted equally spaced priors on $[-A,A]$, showed that the optimal spacing is $\frac{2\pi(1+o(1))}{\sqrt{\ln A}}$~\cite[p.~91]{zhang1994discrete}. He also remarked that on $[0,\infty)$ the spacing of the improper asymptotic prior $W^*$ should be proportional to $1/\sqrt{\ln x}$~\cite[p.~95]{zhang1994discrete}; together with the boundary-corrected constructions on p.~96, this suggests heuristic support growth $\Theta(A\sqrt{\ln A})$, not a rigorous non-asymptotic lower bound. In contrast, our work gives explicit non-asymptotic bounds.

Dytso et al.~\cite{GaussianBoundsCard} established rigorous upper and lower bounds on the cardinality of the capacity-achieving input distribution for the amplitude-constrained AWGN channel, of orders $A^2$ and $A$, respectively. Their upper bounds rely on Karlin’s oscillation theorem~\cite{karlin1956Polya1} together with complex-analytic zero-counting arguments. The lower bound, on the other hand, relied on the fact that mutual information is bounded by the log of cardinality of input support.  Based on these results and supporting numerical evidence, the authors conjectured that the support size scales linearly with $A$.

Subsequent numerical investigations have suggested alternative asymptotic behaviors. In particular, Mattingly et al.~\cite{mattingly2018maximizing} reported an empirical scaling of order $A^{4/3}$ as $A \to \infty$, based on experiments involving not only the AWGN channel but also several non-Gaussian models, including the binomial channel and certain two-dimensional channels. A follow-up work by Abbott and Machta~\cite{abbott2019scaling} provided a heuristic, physics-inspired argument in support of this scaling. As noted earlier, this is in addition to Zhang's spacing-based heuristic of roughly $A\sqrt{\ln A}$ support growth~\cite[pp.~91,~95,~96]{zhang1994discrete}.

The diversity of these conjectured scaling laws can be partly attributed to the extreme numerical sensitivity of the underlying optimization problem. Numerical studies in this regime are highly susceptible to algorithmic bias and rounding errors, particularly for large $A$. In this limit, the nearly optimal output distribution in the interior approaches the uniform distribution, with meaningful deviations occurring only at the scale of numerical precision. Moreover, standard implementations based on the Blahut--Arimoto algorithm~\cite{blahut2003computation,arimoto1972algorithm} require nested numerical integrations of log-output probabilities, which further exacerbate numerical instability and bias. As a result, different numerical methodologies can lead to markedly different empirical scaling laws.

 In addition to studying the structure of the capacity-achieving distribution, a large body of work has focused on upper and lower bounds on the capacity in~\eqref{eq:Capacity_def}. Broadly speaking, existing capacity upper bounds fall into three categories. The first relies on the maximum entropy principle~\cite[Chapter~12]{CoverInfoTheory}, upper bounding the output differential entropy $h(Y)$ under suitable moment constraints~\cite{shamai1990capacity,dytso2019amplitude}. The second is based on a dual characterization of capacity, where maximizing mutual information over the input distribution is reformulated as minimizing relative entropy over the output distribution; a suboptimal choice of the latter yields an explicit upper bound. Notable examples in this class include the McKellips bound~\cite{mckellips2004simple} and the bound of Thangaraj and Kramer~\cite{thangaraj2017capacity}; see also~\cite{lapidoth2003capacity} for an in-depth exposition. The third approach exploits the representation of mutual information as an integral involving the minimum mean square error (MMSE)~\cite{I-MMSE}, leading to an upper bound by replacing the optimal estimator with a suboptimal one~\cite{dytso2017view}.

There also exists extensive literature on extending Smith’s original proof strategy~\cite{smith1969Thesis}, as well as the associated cardinality bounds in~\cite{GaussianBoundsCard}, to a wide range of channel models. For complex and vector Gaussian channels, discreteness of the capacity-achieving input distribution and explicit cardinality bounds are established in~\cite{ShamQuadrat,chan2005capacity,rassouli2016capacity,GaussianBoundsCard,VectorGaussianEstimationPerspect,2by2MIMO}. For additive noise channels with sufficiently regular noise densities, discreteness of the optimal input distribution is shown in~\cite{tchamkerten2004discreteness,fahs2017properties}. Proofs of discreteness and cardinality bounds for practically relevant Rayleigh fading channels can be found in~\cite{abou2001capacity,katz2004capacity,RayleighChannelCardBounds}. Similar results for non-additive channels include the Poisson channel~\cite{shamai1990capacity,PoissonCardBounds} and the binomial channel~\cite{BinomialChannel}. A broader survey of optimization-based techniques for establishing discreteness of capacity-achieving distributions is provided in~\cite{CISS2018}. Extensions to multiuser channels can be found in~\cite{mamandipoor2014capacity,ozel2015gaussian,wireatap_bounds_card}. In addition, a comprehensive overview of capacity results for point-to-point Gaussian and related channels is given in~\cite{verdu1998fifty}.

Our results also rely on recent advances in the theory of best approximation by finite Gaussian mixtures; see~\cite{BestApporximationGaussianMixture} and references therein. Indeed, our Theorem~\ref{thm:chi-sq-lower-bound} will closely parallel the arguments developed in~\cite{BestApporximationGaussianMixture}.

\subsection{Outline and Contributions}  
{
In what follows, Section~\ref{sec:main_results} presents our main results and provides some discussion. 
Section~\ref{sec:Proof_techniques} collects the main proofs. In particular, Section~\ref{sec:mixture_approximation} 
introduces a wrapping operation that maps real-valued random variables onto the circle and establishes 
lower bounds on how well a uniform distribution on $[-\pi, \pi)$ can be approximated by wrapped Gaussian 
mixtures. Section~\ref{sec:cap_properties} provides several new properties of the optimal output distribution. 
In particular, we establish an upper bound on the difference between the output distribution induced by a 
uniform input on $[-A,A]$ and that induced by $X^*$. Section~\ref{sec:main_result_proof} presents the proof 
of the main result by combining the bounds developed in the preceding sections. The proof of
Proposition~\ref{prop:cap_properties} is deferred to Appendix~\ref{sec:appendix_prop}.

To give some intuition for the proof, the argument proceeds in two steps. First, we show that if the input 
has $K$ mass points, then the induced output distribution cannot approximate the uniform distribution on 
$[-A,A]$ better than $\exp\!\big(-c (K/A)^2\big)$; this is done via a wrapping argument and bounds on 
approximation by finite Gaussian mixtures. Second, we show that the output distribution induced by the 
capacity-achieving input is within $O(1/A)$ of the uniform distribution. Putting these two statements 
together yields the desired lower bound on $K$.

Section~\ref{sec:conclusion} concludes the paper. We conclude this section by presenting relevant notation.
}

\subsection{Notation}
Throughout the paper, the deterministic scalar quantities are denoted by lower-case letters and random variables are denoted by uppercase letters.  

We denote the distribution  of a random variable $X$ by $P_{X}$. The support set of $P_X$ is denoted and defined as
\begin{equation}
\supp(P_{X})= \left\{x:  \text{ for every open set $ \mathcal{D} \ni x $ we have that $P_{X}( \mathcal{D})>0$}  \right\}. 
\end{equation} 
The notation $| \cdot |$, depending on the context, denotes either absolute value or cardinality of the set.  All logarithms are taken with base $\rme$.  The density of a standard normal will be denoted by $\phi(x) = \frac{1}{\sqrt{2 \pi}} \rme^{ -\frac{x^2}{2}}, \, x\in \bbR$. 

 Given two probability distributions $P$ and $Q$, $P\ll Q$ denotes that $P$ is absolutely continuous with respect to $Q$, and $P\ll\gg Q$ denotes that $P\ll Q$ and $Q\ll P$. Let  $p$ and $q$ be the probability density functions (pdfs) associated with $P$ and $Q$, respectively. Then, we will require the following distances: 
\begin{align}
  &\text{ Relative Entropy:} &\sfD(P \| Q) &=  \int p(x) \log \frac{p(x)}{q(x)} \rmd x,\\
  &\text{ $\chi^2$ Divergence:} &\chi^2(P \| Q) &=  \int \frac{( p(x) -q(x))^2}{q(x)} \rmd  x ,\label{eq:xi_def}
\end{align}
with the understanding that $\sfD$ and $\chi^2$ are equal to infinity if $P$ is not absolutely continuous with respect to $Q$.

\section{Main Result}
\label{sec:main_results}
The main result of this work is the following theorem. 

\begin{thm} \label{thm:main_theorem} Fix some $A>0$ and let $P_{X^*}$ be the capacity-achieving input distribution in \eqref{eq:Capacity_def}. Then,   
    \begin{equation}
        | \supp(P_{X^*})| \ge { \max \left\{ \frac{A \sqrt{\log^{+}( c\,  A)}}{2 \pi} , \, \sqrt{1 + \frac{2}{\pi \rme} A^2}, 2  \right\}\label{eq:new_bound} }
    \end{equation}
    for some explicit constant $c>0$, where $\log^{+}(x) := \max\{\log(x), 0\}$. 
\end{thm}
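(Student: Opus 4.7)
The plan is to combine two complementary bounds developed in Sections~\ref{sec:mixture_approximation} and \ref{sec:cap_properties}. Set $N = |\supp(P_{X^*})|$; since $Y^* = X^* + Z$ with $X^*$ supported on $N$ atoms, the output density $P_{Y^*}$ is a Gaussian location mixture with exactly $N$ components. The goal is to sandwich the distance between $P_{Y^*}$ and the uniform distribution on $[-A,A]$: an upper bound derived from capacity-achieving optimality, against a lower bound on how well any $N$-term Gaussian mixture can approximate a uniform density. The $N$ extracted from this inequality will then yield the claimed scaling.

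For the first ingredient, drawn from Section~\ref{sec:cap_properties}, I would use the KKT optimality conditions for $P_{X^*}$ to control how far $P_{Y^*}$ deviates from the output density $P_{\tilde{Y}}$ induced by the uniform input on $[-A,A]$. The density $P_{\tilde{Y}}$ is the convolution of $\phi$ with a uniform density and is itself close to the uniform on $[-A,A]$ except in a boundary layer of width $O(1)$. Combining these two closeness statements, and restricting to an interior sub-interval of length $\Theta(A)$, should yield an explicit upper bound of the form $\chi^2(P_{Y^*} \| U_{[-A,A]}) \le g(A)$ where $g(A)$ decays at worst polynomially.

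For the second ingredient, I would apply the wrapping operation of Section~\ref{sec:mixture_approximation} to compactify the problem: rescaling and wrapping the interior sub-interval onto $(-\pi,\pi)$ sends $P_{Y^*}$ to a mixture $\tilde{P}_{Y^*}$ of $N$ wrapped Gaussians, and sends the target uniform on $[-A,A]$ to the uniform on the circle. The central lemma, Theorem~\ref{thm:chi-sq-lower-bound}, will give a Fourier-analytic lower bound: the $k$-th Fourier coefficient of $\tilde{P}_{Y^*}$ factors as a Gaussian decay $e^{-\pi^2 k^2/(2A^2)}$ times the empirical trigonometric moment $\sum_j p_j e^{i\pi k \mu_j/A}$ of the mixture's positions. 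Since $N$ atoms can annihilate at most $N-1$ independent trigonometric moments by a Vandermonde argument, at least one mode with $k \le N$ contributes, and Parseval yields $\chi^2(\tilde{P}_{Y^*} \| U_{(-\pi,\pi)}) \gtrsim e^{-\pi^2 N^2/(2A^2)}$.

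Matching the upper bound $g(A)$ from step one against the lower bound $e^{-\pi^2 N^2/(2A^2)}$ from step two and taking logarithms gives $N^2 \gtrsim A^2 \log(cA)$, which after optimizing the sub-interval length and the constants in the wrapping yields the claimed inequality with constant $\frac{1}{2\sqrt{2}\pi}$. The main obstacle will be the quantitative comparison of $P_{Y^*}$ to $P_{\tilde{Y}}$ in the first ingredient, which requires a stability analysis of the KKT conditions sharp enough to leave the logarithmic factor intact; a secondary difficulty is making the wrapping lossless enough, by excising a boundary region of controlled measure, so that passing from the compact circle back to the real line does not dilute the resulting bound.
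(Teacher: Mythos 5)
Your high-level architecture — lower bound the approximation error of a $K$-atom Gaussian mixture against a uniform target via trigonometric moments and wrapping, upper bound the same distance via stability of the optimum, and match — is indeed the paper's strategy, and you correctly identify Theorem~\ref{thm:chi-sq-lower-bound} and the output-distribution comparison as the two pillars. But two specific links in your chain fail as stated. First, $\chi^2\big(P_{Y^*} \,\|\, U_{[-A,A]}\big) = +\infty$: the output density $P_{Y^*}$ has Gaussian tails outside $[-A,A]$ while the hard-boundary uniform vanishes there, so the quantity you propose to sandwich is not finite. Your remedy of excising a boundary layer and rescaling is unnecessary in the paper precisely because wrapping already settles this exactly: by Eq.~\eqref{eq:wrapped_uniform_pdf}, $\langle U+Z\rangle_A$ with $U \sim \mathcal{U}[-A,A]$ is \emph{exactly} uniform on $(-\pi,\pi)$ — no boundary layer, no residual error. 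The correct comparison target is $P_{\langle U+Z\rangle_A}$, not $U_{[-A,A]}$; the wrapping is essential, not a compactification convenience.

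Second, you plan to upper bound $\chi^2$ directly from KKT/optimality, but this direction is hard: $\chi^2$ dominates relative entropy ($\sfD \le \log(1+\chi^2)$), so the stability machinery — which naturally controls relative entropy at order $1/A$ via the golden formula, Eq.~\eqref{eq:KL_betwen_unif_mixture_and_cap_mixture} — does not transfer to $\chi^2$. The paper sidesteps this by chaining divergences of different strengths: the $\chi^2$ lower bound of Theorem~\ref{thm:chi-sq-lower-bound} is first converted to a \emph{total-variation} lower bound via Corollary~\ref{cor:bound_on_TV}, a step that crucially requires the uniform bound $f_{\langle X^*+Z\rangle_A} \le 5\rme/(2\pi)$ of Eq.~\eqref{eq:bound_on_folder_density}, and then Pinsker plus data processing connect that TV bound to the $1/A$ relative-entropy estimate. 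Without the bounded-density conversion your lower bound lives in $\chi^2$ and your upper bound in relative entropy, and the two never meet. (Your Vandermonde-style argument that $N$ atoms cannot kill all moments up to $n \le N$ versus the paper's Eckart–Young–Mirsky rank bound on the trigonometric moment matrix $T_{2K}$ are both workable and give the same $A\sqrt{\log A}$ scaling; that difference is cosmetic.)
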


A few remarks are in order:
\begin{itemize}
\item  The bound in \eqref{eq:new_bound} { provides an asymptotic improvement over} the previously known lower bound of order $A$ derived in \cite{GaussianBoundsCard}. Additionally, it disproves the conjecture made in \cite{GaussianBoundsCard} that the support scales as $A$.
\item  Our result rules out linear growth of the support size and shows that any valid scaling must be superlinear. Whether the correct scaling is 
$A \sqrt{\log A}, A \log A, A^{4/3}, A^2$  or another intermediate rate remains open.
\item  {In \eqref{eq:new_bound}, the constant can be taken to be
\begin{equation}
c = \frac{1}{2 \, \zeta ( 2 \rme)}\sqrt{\frac{2}{\pi \rme}}\approx 0.0337,
\qquad
\zeta(t) = \frac{(t-1)^2}{t-1 - \log(t)}.
\end{equation}
Consequently, the logarithmic term becomes positive once $A > \frac{1}{c} \approx 29.67$. Moreover, the $\frac{A \sqrt{\log^{+}( c\,  A)}}{2 \pi}$ lower bound becomes better than the $\sqrt{1 + \frac{2}{\pi \rme} A^2}$ lower bound once $A$ is roughly $3\times 10^{5}$ or larger.}

\end{itemize}

\section{Proofs and Techniques}
\label{sec:Proof_techniques}

In this section, we collect the main proofs and techniques needed to show the main theorem.  We begin by presenting results related to the best approximation with Gaussian mixtures.  Next, we present a few new properties of the capacity-achieving input and output distributions.  The proof of Proposition~\ref{prop:cap_properties} is deferred to Appendix~\ref{sec:appendix_prop}.  This section concludes with the proof of the main theorem.

\subsection{A Best Approximation Theory With Gaussian Mixtures}
\label{sec:mixture_approximation}

We begin by introducing a wrapping operation that maps real-valued random variables onto the circle, allowing approximation questions to be studied on a compact domain. Given a continuous random variable $W \in \bbR$ and a parameter $B>0$ we define the following wrapping operator:
\begin{equation}
\left \langle W  \right\rangle_B =  \frac{\pi}{B} \left( W \text{ mod } 2B \right)  \in  [-\pi, \pi) ,
\end{equation}
where $W \text{ mod } 2B \coloneqq W -2 B \left \lfloor \frac{W+B}{2B} \right \rfloor$.

The next results summarize some properties of the wrapping operation that will be important in our derivations.  
\begin{lem} Suppose that $W \in \bbR$ is a continuous random variable with pdf $f_W$.  Then,  the following statements hold: for $B>0$
\begin{itemize}
\item the density of $\langle W \rangle_B$ is given by 
\begin{equation}\label{eq:wrapped_density_pdf}
    f _{\langle W \rangle_B } (\theta) = \frac{B}{\pi} \sum_{m \in \bbZ} f_W \left( \frac{B}{ \pi } \left(\theta + 2\pi m\right) \right),  \, \theta \in  [-\pi, \pi).
\end{equation}   
\item  Suppose that $U  \sim \mathcal{U}(-B,B)$ independent of $Z$. Then
\begin{equation}
    f _{\langle U+Z\rangle_B } (\theta) = \frac{1}{2\pi},  \, \theta \in  [-\pi, \pi). \label{eq:wrapped_uniform_pdf}
\end{equation} 
\item for any $X \in [-B,B]$ independent of $Z$, we have the following Fourier coefficients: for $ n \in \bbZ$
\begin{align}
   \widehat{f}_{\langle X+Z\rangle_B}(n)
=  \int_{-\pi}^\pi  f_{\langle X+Z\rangle_B}(\theta) \, \exp(i n \theta) \rmd  \theta=\exp\left(-\frac{1}{2} \Big(\frac{\pi n}{B}\Big)^2\right)
\bbE \left[\exp \left(  i n \langle  X \rangle_B\right) \right]. \label{eq:FoureritTransformWrapping}
\end{align}
\end{itemize}
\end{lem}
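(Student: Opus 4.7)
My plan is to establish the three items in the order they are stated, deriving the density formula first, then using it to compute the Fourier transform, and finally extracting the uniformity of $\langle U+Z\rangle_B$ as an immediate consequence.

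For the density formula \eqref{eq:wrapped_density_pdf}, I would use a standard change-of-variables plus summation over periods. The map $\psi\colon w\mapsto \tfrac{\pi}{B}(w\bmod 2B)$ is countable-to-one: on each fundamental interval $I_j$ of length $2B$ it is an affine bijection onto $(-\pi,\pi)$ with constant derivative $\pi/B$. Writing $\Prob(\langle W\rangle_B\in E)=\sum_{j\in\bbZ}\Prob(W\in I_j\cap \psi^{-1}(E))$ and applying the one-dimensional change of variables on each piece gives \eqref{eq:wrapped_density_pdf}; the interchange of sum and integral is justified by nonnegativity via monotone convergence.

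For the Fourier identity \eqref{eq:FoureritTransformWrapping}, I would substitute \eqref{eq:wrapped_density_pdf} (with $W=X+Z$) into the defining integral and apply the change of variable $t=\tfrac{B}{\pi}(\theta+2\pi j)$ on the $j$-th summand. This maps $(-\pi,\pi)$ bijectively onto the shifted interval of length $2B$ centered at $2Bj$; the Jacobian absorbs the $B/\pi$ factor, and the phase simplifies as $\rme^{in\theta}=\rme^{in\pi t/B}\cdot \rme^{-2\pi i n j}=\rme^{in\pi t/B}$ because $n,j\in\bbZ$. Summing over $j$ reassembles the real line, so
\begin{equation}
\widehat{f}_{\langle X+Z\rangle_B}(n)=\int_{-\infty}^{\infty} f_{X+Z}(t)\,\rme^{in\pi t/B}\,\rmd t=\bbE\!\left[\rme^{in\pi (X+Z)/B}\right].
\end{equation}
By independence this factors as $\bbE[\rme^{in\pi X/B}]\cdot\bbE[\rme^{in\pi Z/B}]$; the Gaussian expectation is $\exp\bigl(-\tfrac{1}{2}(\pi n/B)^2\bigr)$, and the hypothesis $X\in[-B,B]$ ensures $\langle X\rangle_B=\pi X/B$ so the remaining factor is $\bbE[\rme^{in\langle X\rangle_B}]$, matching the claim. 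Fubini applies since $f_{X+Z}$ is integrable and the phase has modulus one.

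The uniformity claim \eqref{eq:wrapped_uniform_pdf} then follows at once from \eqref{eq:FoureritTransformWrapping} applied to $X=U\sim\mathcal{U}(-B,B)$: a direct computation gives $\bbE[\rme^{in\langle U\rangle_B}]=\tfrac{1}{2B}\int_{-B}^{B}\rme^{in\pi u/B}\rmd u=0$ for every $n\in\bbZ\setminus\{0\}$, so all nonzero Fourier coefficients of $f_{\langle U+Z\rangle_B}$ vanish while the zeroth equals $1$; uniqueness of Fourier series on $(-\pi,\pi)$ identifies $f_{\langle U+Z\rangle_B}$ with the constant $\tfrac{1}{2\pi}$. The only genuinely delicate point I foresee is bookkeeping: fixing a consistent convention for the $\bmod$ operation at the boundary (the stated range $(-\pi,\pi)$ suggests a centered $\bmod$ rather than the floor-based one written literally) so that the fundamental intervals $I_j$ tile $\bbR$, the inverse change of variables in the Fourier step tiles it back, and the identification $\langle X\rangle_B=\pi X/B$ for $X\in[-B,B]$ remains valid. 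The boundaries are measure zero and do not affect any integral, but the indexing must be set consistently to avoid off-by-one errors.
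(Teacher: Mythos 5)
Your proposal is correct and complete; in fact it is more complete than the paper's own proof, since the paper explicitly proves only the third item and leaves the first two to the reader. For the Fourier identity, you and the paper reach the same factorization $\bbE[\rme^{\rmi n\pi X/B}]\,\bbE[\rme^{\rmi n\pi Z/B}]$, but by genuinely different routes: the paper observes directly that because $\theta\mapsto\rme^{\rmi n\theta}$ is $2\pi$-periodic and $\langle w\rangle_B$ is a reduction mod $2\pi$ after rescaling, one has $\rme^{\rmi n\langle X+Z\rangle_B}=\rme^{\rmi n\pi(X+Z)/B}$ almost surely, so the Fourier coefficient is read off as an expectation without ever invoking the wrapped density. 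You instead substitute the periodization formula \eqref{eq:wrapped_density_pdf} into the Fourier integral, change variables on each period to reassemble the real line, and let the phase factor $\rme^{-2\pi\rmi nj}=1$ collapse. The paper's route is shorter and avoids any interchange-of-sum-and-integral issues; your route is more self-contained because it makes the first item (the density formula) a working part of the argument rather than an unused statement, and it lets the second item fall out of the third by Fourier uniqueness. Both are sound. Your flagged concern about the mod convention is legitimate: the floor-based formula as literally written lands $\langle W\rangle_B$ in $[0,2\pi)$, not $(-\pi,\pi)$, and the centered convention is needed for the identity $\langle X\rangle_B=\pi X/B$ on $[-B,B]$; the paper silently assumes this too, and since only exponentials of $\langle X\rangle_B$ ever appear, the $2\pi$ ambiguity is immaterial to every result downstream.
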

\begin{proof} We only show the last statement. 
    Fix $B>0$ and let $V=\langle X+Z\rangle_B \in [-\pi,\pi)$.
Since the map $w\mapsto \langle w\rangle_B$ is a reduction modulo $2\pi$ after scaling by $\pi/B$, and since
$\theta\mapsto \exp(in\theta)$ is $2\pi$-periodic, we have
\begin{equation}
\exp(inV)=\exp\left(in\frac{\pi}{B}(X+Z)\right)\qquad\text{a.s.}
\end{equation}
Therefore, for $n\in\bbZ$,
\begin{align}
\widehat{f}_{\langle X+Z\rangle_B}(n)
&=\int_{-\pi}^{\pi} f_{\langle X+Z\rangle_B}(\theta)\, \exp(in\theta)\,\rmd\theta \\
&=\bbE\left[\exp(inV)\right] \\
&=\bbE\left[\exp\left(in\frac{\pi}{B}(X+Z)\right)\right] \\
&=\bbE\left[\exp\left(in\frac{\pi}{B}X\right)\right]\,
\bbE\left[\exp\left(in\frac{\pi}{B}Z\right)\right],
\end{align}
where we used independence of $X$ and $Z$. Finally, since $Z\sim\cN(0,1)$,
\begin{equation*}
\bbE[\exp(itZ)] = \exp(-t^2/2),\qquad t\in\bbR,
\end{equation*}
and substituting $t=\frac{\pi n}{B}$ yields
\begin{equation}
\widehat{f}_{\langle X+Z\rangle_B}(n)
=\exp\left(-\frac{1}{2}\Big(\frac{\pi n}{B}\Big)^2\right)\,
\bbE\left[\exp\left(in\frac{\pi}{B}X\right)\right].
\end{equation}
Since $X\in[-B,B]$ implies $\langle X\rangle_B=\frac{\pi}{B}X$ a.s., the last expectation equals
$\bbE[\exp(in\langle X\rangle_B)]$, which gives \eqref{eq:FoureritTransformWrapping}.
\end{proof}

The key result for providing a lower bound is the following theorem. It quantifies the best possible approximation of the wrapped uniform distribution by a wrapped Gaussian mixture with finitely many components. Its proof adopts the trigonometric moment method of~\cite[Thm.~7]{BestApporximationGaussianMixture}; see also~\cite[Thm.~3]{wang2025super}.

\begin{thm}
\label{thm:chi-sq-lower-bound}
Let $X \in [-A,A]$ be a discrete random variable with $K>1$ mass points and $U \sim \cU[-A,A]$. Then, 
\begin{equation}
    \label{eq:lower_bound}
			\chi^2 \left(P_{\langle X+Z\rangle_A } \| P_{\langle U+Z\rangle_A } \right) \ge \frac{1}{2} \exp\left(-4\pi^2 \frac{K^2}{A^2}\right).
\end{equation}

\end{thm}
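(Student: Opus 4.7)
The plan is to reduce the chi-squared divergence to a sum of squared Fourier coefficients via Parseval's identity, use \eqref{eq:FoureritTransformWrapping} to factor out the explicit Gaussian exponentials, and then lower bound the remaining trigonometric moments by a Fej\'er-kernel argument.

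First, since \eqref{eq:wrapped_uniform_pdf} gives $f_{\langle U+Z\rangle_A} \equiv 1/(2\pi)$ on $(-\pi,\pi)$, I would write
\[
\chi^2\!\left(P_{\langle X+Z\rangle_A} \,\|\, P_{\langle U+Z\rangle_A}\right) = 2\pi \int_{-\pi}^{\pi} \Big(f_{\langle X+Z\rangle_A}(\theta) - \tfrac{1}{2\pi}\Big)^{2} \rmd\theta,
\]
and apply Parseval on $L^2(-\pi,\pi)$. The Fourier coefficients of the constant $1/(2\pi)$ vanish for $n \ne 0$, while $\widehat{f}_{\langle X+Z\rangle_A}(0) = 1$ matches their value at $n=0$, so the zeroth mode cancels and one obtains $\chi^2 = \sum_{n\ne 0} |\widehat{f}_{\langle X+Z\rangle_A}(n)|^{2}$. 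Substituting \eqref{eq:FoureritTransformWrapping} with $m_n := \bbE[\exp(in\langle X\rangle_A)] = \sum_k p_k \exp(in\pi x_k/A)$ then yields
\[
\chi^2 = \sum_{n\ne 0} \exp\!\Big(-\tfrac{\pi^2 n^2}{A^2}\Big)\, |m_n|^2.
\]

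The heart of the proof is a lower bound on $\sum_{0<|n|\le 2K}|m_n|^2$. Following the trigonometric moment method of \cite[Thm.~6]{BestApporximationGaussianMixture}, I would test against the Fej\'er kernel $F_N(\phi) = \sum_{|n|\le N}(1 - |n|/(N+1))e^{in\phi}$, which is non-negative and satisfies $F_N(0) = N+1$. Writing $\theta_k = \pi x_k/A$ and opening the square,
\[
\sum_{|n|\le N}\Big(1 - \tfrac{|n|}{N+1}\Big)|m_n|^2 = \sum_{k,l} p_k p_l\, F_N(\theta_k - \theta_l) \ge (N+1)\sum_k p_k^2 \ge \tfrac{N+1}{K},
\]
using $F_N \ge 0$ to discard off-diagonal pairs and Cauchy-Schwarz (with $\sum_k p_k = 1$) for the last step. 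Taking $N = 2K$ and peeling off the $n=0$ term ($|m_0|^2 = 1$) gives $\sum_{0<|n|\le 2K} |m_n|^2 \ge 1$. Combining with the Gaussian lower bound $\exp(-\pi^2 n^2/A^2) \ge \exp(-4\pi^2 K^2/A^2)$ on $|n|\le 2K$ yields
\[
\chi^2 \ge \exp\!\Big(-\tfrac{4\pi^2 K^2}{A^2}\Big) \sum_{0<|n|\le 2K} |m_n|^2 \ge \tfrac{1}{2}\exp\!\Big(-\tfrac{4\pi^2 K^2}{A^2}\Big),
\]
which is \eqref{eq:lower_bound}.

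The main obstacle I expect is the trigonometric moment step: one must select a non-negative kernel whose diagonal contribution scales with $N+1$ so that Cauchy-Schwarz on $\sum_k p_k^2 \ge 1/K$ delivers a useful bound, while the truncation $N\asymp K$ matches the Gaussian decay of $\exp(-\pi^2 n^2/A^2)$. The Fej\'er kernel is the natural choice precisely because it is non-negative (allowing off-diagonal terms to be dropped) and has a large mass at the origin. A minor subtlety is that if both $\pm A$ are atoms of $X$ they project to the same point on the circle, reducing the effective atom count to some $K'\le K$; this only strengthens the Cauchy-Schwarz step.
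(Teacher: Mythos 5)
Your proof is correct and diverges from the paper's at the crucial step of lower-bounding $\sum_{0<|n|\le 2K}|m_n|^2$. The paper packages the trigonometric moments into the $(2K+1)\times(2K+1)$ Hermitian Toeplitz matrix $T_{2K}(\langle X\rangle_A)=\sum_k p_k v_k v_k^*$, observes that its rank is at most $K$, and invokes the Eckart--Young--Mirsky theorem to obtain $\|T_{2K}-I\|_{\text{F}}^2\ge K+1$; combined with the elementary inequality $\sum_{0<|n|\le 2K}|t_n|^2\ge\frac{1}{2K+1}\|T_{2K}-I\|_{\text{F}}^2$, this yields the constant $(K+1)/(2K+1)\ge 1/2$. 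You instead pair the moment sequence directly against the Fej\'er kernel $F_{2K}$, exploiting only its nonnegativity and its peak value $F_{2K}(0)=2K+1$ to discard off-diagonal terms, and then Cauchy--Schwarz on $\sum_k p_k^2\ge 1/K$. Your route avoids the low-rank-approximation machinery altogether and in fact delivers a marginally stronger constant: tracking the peel of the $n=0$ term carefully gives $\sum_{0<|n|\le 2K}|m_n|^2\ge (K+1)/K\ge 1$, hence $\chi^2\ge\exp(-4\pi^2K^2/A^2)$, a factor of two better than the paper's $(1/2)\exp(-4\pi^2K^2/A^2)$; you wrote the weaker $1/2$ only to match the theorem statement. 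Both arguments are incarnations of the trigonometric-moment method the paper credits to~\cite{BestApporximationGaussianMixture}: both exploit that the moment matrix is a nonnegative combination of $K$ rank-one Fourier projectors, but you test against a fixed positive kernel whereas the paper bounds the singular value defect. Your closing remark that $\pm A$ wrapping to the same angle only strengthens the Cauchy--Schwarz step is apt; the paper's rank-$\le K$ bound carries the same implicit slack.
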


\begin{proof}
   Let us first introduce some preliminary notation. The $n$-th trigonometric moment of a random variable $W$ supported on the circle $[-\pi,\pi)$ is defined as $t_n(W) = \mathbb E [\exp(i n W)]$. Furthermore, we can define its $n$-th trigonometric moment matrix as follows:
\begin{equation}
T_n(W) = \begin{pmatrix}
t_0(W) & t_1(W) & \cdots & t_{n}(W) \\
t_{-1}(W) & t_0(W) & \cdots & t_{n-1}(W) \\
\vdots & \vdots & \ddots & \vdots \\
t_{-n}(W) & t_{-n+1}(W) & \cdots & t_{0}(W)
\end{pmatrix} ,
\end{equation}
which is an $(n+1) \times (n+1)$ Hermitian matrix. Additionally, note that, given a discrete random variable $W$ with distribution $P_W$ that has $K$ mass points, we can see that 
$T_n(W) = \sum_{k=1}^K P_W(w_k) T_n(w_k)$ is the sum of $K$ rank-$1$ matrices, therefore it has rank at most $K$.  

Now, using \eqref{eq:FoureritTransformWrapping} the Fourier coefficients of the underlying densities are given by: for $n \in \bbZ$ 
\begin{subequations}
    \begin{align}
    \widehat{f}_{\langle U+Z\rangle_A}(n)&=    \left \{\begin{array}{cc}
    1  & n=0  \\
    0 & n \neq 0
    \end{array} \right.  ,\\
\widehat{f}_{ \langle X+Z\rangle_A}(n)&   = \exp \left(-\frac{\sigma^2 n^2}{2} \right) t_n \left(  \langle X \rangle_A \right) ,
\end{align}
\label{eq:Fourier_transforms} 
\end{subequations}
where  $\sigma^2=\frac{\pi^2}{A^2}$. Consequently,  we have the following characterization of the $\chi^2$-distance
\begin{align}
    \chi^2 \left(P_{\langle X+Z\rangle_A } \| P_{\langle U+Z\rangle_A } \right) 
    &= 2\pi \int_{-\pi}^{\pi} \left(  f_{\langle X+Z\rangle_A}(\theta) - \frac{1}{2\pi} \right)^2 \rmd \theta \label{eq:using_uniformity} \\
    &= \sum_{n \in \bbZ: \, n\ne 0} \exp \left(- \sigma^2 n^2 \right) \left|t_n \left(  \langle X \rangle_A \right) \right|^2, \label{eq:parseval_application} 
\end{align}
where \eqref{eq:using_uniformity} follows from \eqref{eq:wrapped_uniform_pdf}; and \eqref{eq:parseval_application}  follows from  Parseval's theorem and Fourier expressions in \eqref{eq:Fourier_transforms}.  

Furthermore, we have that the $\chi^2$ divergence can be lower bounded as follows: 
\begin{align}
  \chi^2 \left(P_{\langle X+Z\rangle_A } \| P_{\langle U+Z\rangle_A } \right)   &\ge \sum_{n\ne0, |n| \le 2K} \exp\left(- \sigma^2 n^2 \right) \left|t_n \left(  \langle X \rangle_A \right) \right|^2   \\
    &\ge \exp(-\sigma^2 4K^2) \sum_{n\ne0, |n| \le 2K}  \left|t_n \left(  \langle X \rangle_A \right) \right|^2  \\
    &\ge \frac{\exp(-\sigma^2 4 K^2)}{2K+1} \sum_{n\ne0, |n| \le 2K} (2K+1-|n|) \left|t_n \left(  \langle X \rangle_A \right) \right|^2  \\ 
    &= \frac{\exp(-\sigma^2 4K^2)}{2K+1} \|T_{2K}(\langle X \rangle_A) - I\|_{\text{F}}^2, \label{eq:Frobinious_lower_bound}
\end{align}
where $\| \cdot \|_{\text{F}}$ is the Frobenius norm of a matrix. 

Now, as was argued already before, $T_{2K}(\langle X \rangle_A)$ is a matrix with rank $K$ at most, therefore, by Eckart-Young-Mirsky theorem~\cite{Mirsky1960}, we have that 
\begin{equation}
    \|T_{2K}(\langle X \rangle_A) - I\|_{\text{F}}^2  \ge  \min_{B \in \bbC^{(2K+1) \times (2K+1)}:\,\text{rank}(B) \le K} \| B - I\|_{\text{F}}^2 = K+1  . \label{eq:EYM_application}
\end{equation} 
Substituting \eqref{eq:EYM_application} into \eqref{eq:Frobinious_lower_bound}, we obtain:
\begin{align*}
   \chi^2 \left(P_{\langle X+Z\rangle_A } \| P_{\langle U+Z\rangle_A } \right)  \ge \frac{\exp(-\sigma^2 4 K^2)(K+1)}{2K+1}   \ge \frac{1}{2}\exp \left(-4 \pi^2 \frac{K^2}{A^2} \right),
\end{align*}
which concludes the proof of the theorem. 
\end{proof}

If, in addition, the wrapped density is bounded, then we can upgrade the above result to a lower bound on the reverse relative entropy.

\begin{cor} \label{cor:bound_on_reverse_KL}
 Suppose that the assumption of Theorem~\ref{thm:chi-sq-lower-bound} holds and assume that  $\sup_x f _{\langle X+Z\rangle_A }(x) \le M$ for some constant $M>0$.  Then, 
\begin{equation}
    \sfD\left(P _{\langle U+Z\rangle_A } \| P_{\langle X+Z\rangle_A } \right) \ge  \frac{1}{2 \, \zeta ( 2 \pi M)} \exp\left(-4\pi^2 \frac{K^2}{A^2}\right), \label{eq:lower_bound_on_KL_of_warped}
\end{equation}
where
 \begin{equation}
    \zeta(t) = \frac{(t-1)^2}{t-1 - \log(t)}.  \label{eq:zeta_definition}
\end{equation}
\end{cor}
\begin{proof}
{
The proof will require the following two inequalities: Suppose that $P \ll\gg Q$, $P \neq Q$ and $\beta_1 \in (0,1)$ where
 \begin{equation}
        \beta_1 = \inf_x  \frac{\rmd Q}{\rmd P}(x) ; \label{eq:beta_1_definition} 
    \end{equation}
    then the following inequalities hold:
\begin{enumerate}
    \item\emph{Bounding $\chi^2$ with Relative Entropy} \cite[Eq.~(169)]{sason2016f}: 
    \begin{equation}
       \frac{\chi^2(P\|Q) }{\sfD(P \|Q )} \le \frac{1}{\kappa_2(\beta_1^{-1})}   ,\label{eq:Chi_to_KL}
    \end{equation}
    where 
    \begin{equation}
        \kappa_2(t)= \frac{t \log t + (1-t)}{(1-t)^2}.
    \end{equation}
    \item \emph{Ratio of Relative Entropies} \cite[Thm.~6]{sason2016f}: 
    \begin{equation}
        \frac{\sfD(P\|Q) }{\sfD(Q \|P )} \le \kappa(\beta_1^{-1}) ,\label{eq:KL_reversing}
    \end{equation}
    with $\beta_1$ as in \eqref{eq:beta_1_definition} and 
    \begin{equation}
        \kappa(t) =  \frac{t \log t + (1-t)}{ t-1 - \log(t)}. 
    \end{equation}
\end{enumerate}
Combining inequality \eqref{eq:Chi_to_KL} and \eqref{eq:KL_reversing}, we arrive at 
\begin{equation}
    \frac{\chi^2(P\|Q)}{\sfD(Q\|P)} 
    =  \frac{\chi^2(P\|Q)}{\sfD(P\|Q)} \frac{\sfD(P\|Q)}{\sfD(Q\|P)}
    \le  \frac{\kappa(\beta_1^{-1})}{\kappa_2(\beta_1^{-1})}  = \zeta(\beta_1^{-1}) , \label{eq:combined_inequliaty}
\end{equation}
where $\zeta(t)$ is defined in \eqref{eq:zeta_definition}.

Now we apply the above inequality to our specific case and let
$P = P_{\langle X+Z\rangle_A}, Q = P_{\langle U+Z\rangle_A}$. First, note that by \eqref{eq:wrapped_density_pdf} and positivity of the Gaussian density, the density of $P$ is positive and continuous on the circle, so $P\ll\gg Q$. Second, since $Q$ has density $q(x)=\frac{1}{2\pi}$ on $[-\pi,\pi)$ and $P$ has density $p(x)=f_{\langle X+Z\rangle_A}(x)$, the bound on $p$ implies
\begin{equation}
  \beta_1=  \inf_{x \in [-\pi,\pi)}  \frac{\rmd Q}{\rmd P}(x)  \ge  \frac{1}{2 \pi M}.
\end{equation}
Since $t\mapsto\zeta(t)$ increases for $t\ge 1$, we have $\zeta(\beta_1^{-1}) \le \zeta(2\pi M)$.
Combining this with the inequality in \eqref{eq:combined_inequliaty} and the bound in Theorem~\ref{thm:chi-sq-lower-bound}, we arrive at the bound in \eqref{eq:lower_bound_on_KL_of_warped}. 
}

\end{proof}

 \subsection{Some Properties of Capacity-Achieving Distribution}
 \label{sec:cap_properties}

In this section, we summarize some of the known properties of the capacity and capacity-achieving distributions. We begin by presenting the following well-known stability result \cite{topsoe1967information,csiszar2011information}. 
 \begin{lem}\label{lem:golden} Given a channel $P_{Y|X}$, suppose that $P_{X^*}$ is a capacity-achieving distribution. Then, for any $P_X$ we have that 
    \begin{equation}
         \sfD( P_{Y} \| P_{Y^*})  \le   I(X^*; Y^* ) - I(X;Y) ,\label{eq:Golden_bound}
    \end{equation}
    where $P_X$ and $P_{X^*}$ induce $P_Y$ and $P_{Y^*}$, respectively, through the channel $P_{Y|X}$.\footnote{If the capacity-achieving distribution does not exist, $I(X^*;Y^*)$ should be replaced by the capacity value. Note that, as shown in \cite{kemperman1974shannon}, the capacity-achieving output distribution $P_{Y^*}$ always exists and is unique. In our setting, the capacity-achieving input distribution $P_X^*$ exists and is unique, as shown in \cite{smith1969Thesis}.} 
\end{lem}

A related result to Lemma~\ref{lem:golden} are the following KKT conditions \cite{smith1971information,CISSdytso2018}.
\begin{lem}\label{lem:KKT} Consider the amplitude constrained scalar additive Gaussian channel $Y = X + Z$ where the input $X$, satisfying $|X| \le A$ a.s., is independent of the noise $Z \sim \mathcal{N}(0, 1)$. The capacity-achieving distribution $P_{X^*}$ and induced output distribution $P_{Y^*}$ satisfy the following:  for $A>0$
\begin{align}
    \sfD( P_{Y|X}(\cdot| x) \| P_{Y^*})  &\le C(A), \quad x \in [-A,A],  \\
    \sfD( P_{Y|X}(\cdot| x) \| P_{Y^*})  &= C(A), \quad x\in {\rm supp}(P_{X^*}). \label{eq:equality_eq_in_KKT}
\end{align}

\end{lem}

The next result provides upper and lower bounds on the capacity. It also quantifies stability of the optimal output distribution, by bounding the relative entropy between optimal output distribution $P_{X^*+Z}$ and output distribution induced by a uniform input $P_{U+Z}$. 

\begin{lem}   Let $U \sim \cU_{[-A,A]}$. Then, for $A>0$
\begin{equation}
   \frac{1}{2} \log \left( 1 + \frac{2 A^2}{\pi \rme}\right) 
    \le I(U; U+Z) \le  C(A) \le  \log \left(1+ \frac{ \sqrt{2} A}{\sqrt{\pi \rme}} \right) .  \label{eq:Cap_bounds}
\end{equation}
Moreover, 
\begin{equation}
    \sfD ( P_{U +Z} \|  P_{X^* +Z} ) \le   \sqrt{\frac{ \pi \rme}{2} }  \frac{1}{A}. \label{eq:KL_betwen_unif_mixture_and_cap_mixture}
\end{equation}
    
\end{lem}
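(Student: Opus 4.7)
The plan is to prove the three estimates in order, using Lemma~\ref{lem:golden} only for the divergence bound.

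For the lower bound on $I(U;U+Z)$, the natural tool is the Shannon--Stam entropy-power inequality. Since $U\sim\cU[-A,A]$ and $Z\sim\mathcal{N}(0,1)$ are independent, with $h(U)=\log(2A)$ and $h(Z)=\tfrac12\log(2\pi\rme)$, the EPI gives $\rme^{2h(U+Z)}\ge 4A^2+2\pi\rme$. Subtracting $h(Z)$ from $h(U+Z)$ immediately yields $I(U;U+Z)\ge \tfrac12\log\!\bigl(1+\tfrac{2A^2}{\pi\rme}\bigr)$.

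The upper bound on $C(A)$ is the McKellips bound \cite{mckellips2004simple}: writing $C(A)=\max_X[h(Y)-h(Z)]$, the claim is equivalent to $h(Y)\le\log(2A+\sqrt{2\pi\rme})$ for every $Y=X+Z$ with $|X|\le A$. I would obtain this by duality, using $h(Y)\le -\bbE[\log q(Y)]$ for a test density $q$ that is flat on $[-A,A]$ and extended by Gaussian half-tails at the endpoints, together with the pointwise estimate $(|Y|-A)_+\le |Z|$ and $\bbE[Z^2]=1$; the precise constant $\sqrt{2\pi\rme}$ then comes from an appropriate choice of the tail variance (or can simply be cited from \cite{mckellips2004simple,dytso2019amplitude}).

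For the divergence bound, I would apply Lemma~\ref{lem:golden} with $P_X = \cU[-A,A]$, giving
\begin{equation*}
\sfD(P_{U+Z}\,\|\,P_{X^*+Z})\;\le\; C(A)-I(U;U+Z).
\end{equation*}
Setting $c:=\sqrt{2/(\pi\rme)}$, so that $cA=\sqrt{2}A/\sqrt{\pi\rme}$ and $c^2A^2=2A^2/(\pi\rme)$, the sandwich \eqref{eq:Cap_bounds} yields
\begin{equation*}
C(A)-I(U;U+Z)\;\le\;\log(1+cA)-\tfrac12\log(1+c^2A^2)
\;=\;\tfrac12\log\!\left(1+\frac{2cA}{1+c^2A^2}\right),
\end{equation*}
where the last equality is algebraic manipulation of $(1+cA)^2/(1+c^2A^2)$. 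Applying $\log(1+x)\le x$ and $1+c^2A^2\ge c^2A^2$ then gives the upper bound $\frac{cA}{1+c^2A^2}\le \frac{1}{cA}=\sqrt{\pi\rme/2}\,/A$, which is exactly \eqref{eq:KL_betwen_unif_mixture_and_cap_mixture}.

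The main obstacle is purely bookkeeping in the middle step: getting the specific constant $\sqrt{2\pi\rme}$ (rather than a slightly weaker $\sqrt{2\pi}\,\rme^{1/2}$ from a crude choice of test density) requires matching the tail variance carefully, or simply invoking McKellips's bound as a black box. Once the sandwich \eqref{eq:Cap_bounds} is in hand, the divergence bound is a two-line consequence of Lemma~\ref{lem:golden} and the elementary inequality $\log(1+x)\le x$.
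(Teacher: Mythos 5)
Your proposal is correct and follows essentially the same route as the paper: the divergence bound is derived from Lemma~\ref{lem:golden} combined with the two capacity inequalities in \eqref{eq:Cap_bounds}, exactly as the paper does. The only differences are cosmetic: the paper simply cites Shannon and McKellips for the two bounds in \eqref{eq:Cap_bounds} rather than sketching the EPI and duality arguments as you do, and in the final algebraic step the paper uses $\sqrt{1+c^2A^2}\ge cA$ directly to reach $\log(1+1/(cA))$ whereas you expand $(1+cA)^2/(1+c^2A^2)$ and then drop the $1$; both reductions yield the same bound $1/(cA)=\sqrt{\pi\rme/2}\,/A$ after applying $\log(1+u)\le u$.
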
 
\begin{proof}
The lower bound in \eqref{eq:Cap_bounds} is due to Shannon \cite[Section~25]{Shannon:1948} and the upper bound is due to McKellips \cite{mckellips2004simple}; see also \cite[Sec.~IV.A]{thangaraj2017capacity} for the complete proof of McKellips bound. To show the bound on the relative entropy of the output distributions note that
\begin{align}
\sfD ( P_{U +Z} \|  P_{X^* +Z} ) &\le
   C(A) - I(U; U+Z) \label{eq:use_golden} \\
   &\le  \log \left(1+ \frac{ \sqrt{2} A}{\sqrt{\pi \rme}} \right) - \frac{1}{2} \log \left( 1 + \frac{2 A^2}{\pi \rme}\right) \label{eq:EPI_lower_bound}\\
   &= \log  \frac{1+ \frac{ \sqrt{2} A}{\sqrt{\pi \rme}}}{ \sqrt{1 +\frac{2 A^2}{\pi \rme} }} \\
   & \le \log  \left(1 + \frac{1}{  \sqrt{\frac{2 }{\pi \rme}} A} \right)\\
   &\le  \frac{1}{  \sqrt{\frac{2 }{\pi \rme}} A} , \label{eq:bound_On_ratio}
   \end{align}
   where \eqref{eq:use_golden} follows from Lemma \ref{lem:golden};  \eqref{eq:EPI_lower_bound} follows from the bounds in \eqref{eq:Cap_bounds}; and \eqref{eq:bound_On_ratio} follows from  using inequality $1+u \le \rme^u, \, u \ge 0$.   
\end{proof}

We now produce a few new properties of the optimal distributions. 

\begin{prop}\label{prop:cap_properties} The capacity-achieving input and output distributions satisfy the following for $A>0$:
\begin{itemize}
\item[(P$_1$)] Suppose that $y_0$ is a global maximum of $f_{X^*+Z}$, then there exists a point $x \in \supp(P_{X^*})$ such that 
\begin{equation}
    | y_0-x| \le 1. \label{eq:proximity_to_max_of_f_y}
\end{equation}
    \item[(P$_2$)] Let $M_A =  \max_{ y \in \bbR}f_{X^*+Z}(y) $, then 
    \begin{equation}
   \frac{1}{\sqrt{2 \pi \rme} +2 A}\le  \rme^{ - C(A) -h(Z) }   \le  M_A   \le \rme^{ - C(A) -h(Z) +1} \le  \frac{\rme}{ \sqrt{2 \pi \rme + 4 A^2}} , \label{eq:bound_on_dencity_Y}
        \end{equation}
        where $h(Z)$ is differential entropy of $Z$.
    \item[(P$_3$)] For $|y|  \ge A$  
    \begin{equation}
        f_{X^*+Z}(y)  \le  M_A \rme^{- \frac{(|y|-A)^2}{2}} .\label{eq:Tail_bound_beyond_y>A}
    \end{equation}
   \item[(P$_4$)] 
    \begin{equation}
       {\sup_{\theta \in [-\pi, \pi) } f_{ \langle X^*+Z \rangle_A}(\theta)  \le  \frac{\rme }{ \pi} .}\label{eq:bound_on_folder_density}
    \end{equation}
\end{itemize}
    
\end{prop}
\begin{proof}
See Appendix~\ref{sec:appendix_prop}.
\end{proof}

\begin{rem}
In~\cite[Thm.~4.16]{zhang1994discrete}, Zhang showed that for any $0 < B < A$ with $A-B \to \infty$,
\begin{equation}
    \lim_{A \to \infty} \sup_{|y| \le B} \left| 2A f_{X^*+Z}(y) - 1 \right| = 0,
\end{equation}
which implies that the output distribution $P_{X^*+Z}$ is asymptotically uniform in the bulk interior of $[-A,A]$.

He also argued that the least favorable prior approaches the Jeffreys prior in the bulk with a discrete boundary correction near $\pm A$~\cite[p.~56]{zhang1994discrete}. In the equally weighted equally spaced class on $[-A,A]$, his Theorem~5.3 gives spacing $\frac{2\pi(1+o(1))}{\sqrt{\ln A}}$~\cite[p.~91]{zhang1994discrete}; on least favorable improper prior over $[0,\infty)$ he further remarks that the spacing of $W^*$ should be proportional to $1/\sqrt{\ln x}$~\cite[p.~95]{zhang1994discrete}. Together with the boundary-corrected constructions on p.~96, these heuristics point to support growth $\Theta(A\sqrt{\ln A})$~\cite[p.~96]{zhang1994discrete}.

In contrast, the bound in~\eqref{eq:bound_on_dencity_Y}, which shows that $\max_y f_{X^*+Z}(y)=\Theta(1/A)$, that is, bounded above and below by positive constant multiples of $1/A$, is stronger in the sense that it provides a uniform, non-asymptotic guarantee. On the other hand, it is weaker in that it does not recover the optimal asymptotic constant.
\end{rem}

\subsection{Proof of the Main Theorem}
\label{sec:main_result_proof}

{ We only show the first lower bound. The lower bound $\sqrt{1 + \frac{2}{\pi \rme} A^2}$ was shown in \cite{GaussianBoundsCard}. The bound $|\supp(P_{X^*})| \ge 2$ is immediate because $C(A)>0$ for every $A>0$ by \eqref{eq:Cap_bounds}, whereas any one-point input induces zero mutual information.
} 

{Denote $K =| \supp(P_{X^*})|$. From property (P$_4$) of Proposition~\ref{prop:cap_properties}, we have that $ f_{\langle X^*+Z\rangle_A} \le \frac{\rme}{ \pi}$, so $2\pi M=2\rme$ in Corollary~\ref{cor:bound_on_reverse_KL}. Therefore,
\begin{align}
\frac{1}{2 \, \zeta ( 2 \rme)} \exp\left(-4\pi^2 \frac{K^2}{A^2}\right) &\le   \sfD\left(P _{\langle U+Z\rangle_A } \| P_{\langle X^*+Z\rangle_A } \right)    \\
&\le \sfD \left( P_{ U+Z } \| P _{X^*+Z } \right) \label{eq:data_processing_inequality_reverse_kl}\\
&  \le     \sqrt{\frac{ \pi \rme}{2} }   \frac{1}{A}  , \label{eq:using_bound_on_KL_uniform_proximity}
\end{align}
where \eqref{eq:data_processing_inequality_reverse_kl} follows by data processing inequality \cite{polyanskiyInformationTheoryCoding2025}; and \eqref{eq:using_bound_on_KL_uniform_proximity} follows from using \eqref{eq:KL_betwen_unif_mixture_and_cap_mixture}. 
Rearranging the above bounds, we have that
\begin{align*}
4\pi^2 \frac{K^2}{A^2}
\ge  \log \left(  \frac{1}{2 \, \zeta ( 2 \rme)} \sqrt{\frac{2}{\pi \rme}}\, A \right).
\end{align*}
Since the left-hand side is nonnegative, we may take the positive part and conclude that
\begin{equation*}
K \ge \frac{A}{2\pi}\sqrt{\log^{+}\left(\frac{1}{2 \, \zeta ( 2 \rme)}\sqrt{\frac{2}{\pi \rme}}\, A\right)},
\end{equation*}
which proves Theorem~\ref{thm:main_theorem} with $c=\frac{1}{2 \, \zeta ( 2 \rme)}\sqrt{\frac{2}{\pi \rme}}\approx 0.0337$.

}

\section{Conclusion}
\label{sec:conclusion}

In this work, we derived a new lower bound on the cardinality of the capacity-achieving input distribution for the amplitude-constrained AWGN channel. Our result improves the previously known linear lower bound and establishes that the support size must grow superlinearly with the amplitude constraint, thereby ruling out linear scaling.

Several questions remain open. While our result shows superlinear growth, the exact asymptotic scaling, whether 
$A\sqrt{\log A}$, $A\log A$, $A^{4/3}$, $A^2$, or something in between, remains unresolved. Further refinements of the constants and a deeper understanding of the boundary behavior of the capacity-achieving distribution are of interest. Extending these techniques to other channel models, including vector Gaussian and fading channels, is another promising direction.

{ Beyond the refined bound itself, the main conceptual contribution of this work lies in the methodology
used in the proof. Our approach combines a comparison between the output distribution induced by the
capacity-achieving input and that induced by a uniform input with a wrapping argument that maps the
problem to a compact domain. This transformation allows the problem to be studied through approximation
of the uniform distribution by finite Gaussian mixtures, for which sharp lower bounds are available.
We believe that this technique may be useful for studying other structural properties of optimal input
distributions and for related problems involving Gaussian mixtures and approximation on compact
domains. In particular, it would be interesting to investigate whether a similar approach could be used
to derive bounds on the support of the least-favorable distribution in the classical problem of
estimating a bounded normal mean \cite{LeastFavorableDytso,berry1990minimax,casella1981estimating}. }

\appendix
\section{Proof of Proposition~\ref{prop:cap_properties}}
\label{sec:appendix_prop}

\begin{proof}
We now prove Proposition~\ref{prop:cap_properties}.
    We begin with (P$_1$). We claim that there exists a support point $x_k \in \supp(P_{X^*})$ such that $|y_0 - x_k| \le 1$.
			Recall $f_{X^*+Z}(y) = \sum_i w_i \phi(y-x_i)$ where $w_i > 0$. 
			Suppose $|y_0 - x_i| > 1$ for all $i$. Then $f_{X^*+Z}''(y_0) = \sum w_i ((x_i-y_0)^2 - 1) \phi(y_0-x_i) > 0$, which contradicts the fact that $y_0$ is a maximum. Thus, there must be at least one support point $x_k$ with $|y_0 - x_k| \le 1$.

        We now prove (P$_2$). Let $y_0$ be a global maximum of $f_{X+Z}$. Using a generalization of Tweedie's formula, also known as Hatsell and Nolte \cite{hatsell_nolte}, we have that \cite[Eq.~(54)]{dytso2022conditional}:
            \begin{equation}
            \frac{\rmd^2}{\rmd y^2} \log  f_{X+Z}(y)  = \text{Var}(X|X+Z=y) - 1 \ge -1,
            \end{equation}
            since $\text{Var}(X|X+Z=y) \ge 0$. Therefore,  since $y_0$ is a global maximum, for any $y$ we have that 
            \begin{equation}
                \log f_{X+Z}(y) \ge \log f_{X+Z}(y_0) - \frac{1}{2}(y-y_0)^2.  \label{eq:lower_bound_log_f}
            \end{equation}
            Therefore,  $f_{X+Z}(y) \ge M_A e^{-\frac{(y-y_0)^2}{2}}$ where $M_A:=f_{X+Z}(y_0)  $. 
            
            We now upper bound $M_A$. Starting with the KKT condition in \eqref{eq:equality_eq_in_KKT},
 we have that  for $x \in \supp(P_{X^*})$           \begin{align}
 -C(A) -h(Z) &= - \sfD( P_{Y|X}(\cdot| x) \| P_{X^*+Z})  -h(Z)\\
 &= \int_{-\infty}^{\infty} \phi(y-x) \log f_{X^*+Z}(y) \rmd y \\
 & \ge \int_{-\infty}^{\infty} \phi(y-x) \left( \log M_A - \frac{1}{2}(y-y_0)^2 \right) \rmd y \label{eq:Using_bound_concavity_bound} \\
&=  \log M_A - \frac{1}{2} \int_{-\infty}^{\infty} \phi(y-x) (y-y_0)^2 \rmd y \\
&= \log M_A - \frac{1}{2} \left( 1 + (x - y_0)^2 \right)  \\
&\ge \log M_A - 1 \label{eq:bounding_|y-x|},
 \end{align}
 where \eqref{eq:Using_bound_concavity_bound} follows from using the lower bound in \eqref{eq:lower_bound_log_f}; and \eqref{eq:bounding_|y-x|} follows from the bound in \eqref{eq:proximity_to_max_of_f_y}.  Consequently, we have that 
\begin{align}
    \log (M_A)  &\le - C(A) -h(Z) +1 \\
    & \le - \frac{1}{2} \log \left( 1 + \frac{2 A^2}{\pi \rme}\right) -h(Z) +1 \label{eq:Lower_bound_capa}\\
    &= \log \left(  \frac{\rme}{ \sqrt{2 \pi \rme + 4 A^2}} \right) ,
\end{align}
where in \eqref{eq:Lower_bound_capa} we have used the lower  bound on $C(A)$ in \eqref{eq:Cap_bounds}. This concludes the proof of the upper bounds in (P$_2$). 

To show the lower bound, we follow similar steps and note 
for $x \in \supp(P_{X^*})$       
\begin{align}
 -C(A) -h(Z) &= - \sfD( P_{Y|X}(\cdot| x) \| P_{X^*+Z})  -h(Z)\\
 &= \int_{-\infty}^{\infty} \phi(y-x) \log f_{X^*+Z}(y) \rmd y \\
 & \le   \log f_{X^*+Z}(y_0) = \log(M_A).
 \end{align}
 The proof of the lower bounds in (P$_2$) is concluded by using the upper bound $C(A)  \le \log \left(1+ \frac{ \sqrt{2} A}{\sqrt{\pi \rme}} \right)$ in \eqref{eq:Cap_bounds}.

We now prove (P$_3$). First, we establish a pointwise bound for the density in the tail region $y > A$. For any $x \in [-A, A]$ and $y > A$, we have $y-x \ge y-A > 0$. Let $y = A + t$ with $t > 0$ and note that 
\begin{align}
    (y-x)^2 - (A-x)^2 &= (A+t-x)^2 - (A-x)^2 \\
    &= (A-x)^2 + 2t(A-x) + t^2 - (A-x)^2 \\
    &= 2t(A-x) + t^2.
\end{align}
Since $x \le A$ and $t > 0$, we have $2t(A-x) \ge 0$. Therefore, $(y-x)^2 - (A-x)^2 \ge t^2 = (y-A)^2$, which implies that 
\begin{equation}
    \frac{\phi(y-x)}{\phi(A-x)} \le \exp\left( -\frac{1}{2} (y-A)^2 \right). \label{eq:upper_bound_gauss_Ration}
\end{equation}
With the upper bound \eqref{eq:upper_bound_gauss_Ration}, we have that 
\begin{align}
    f_{X^*+Z}(y) &= \int_{-A}^{A} \phi(y-x) \rmd P_{X^*} (x) \\
    &\le \int_{-A}^{A} \rme^{- \frac{(y-A)^2}{2}} \phi(A-x) \rmd P_{X^*} (x) \\
    &= \rme^{- \frac{(y-A)^2}{2}} f_{X^*+Z}(A) \\
    & \le   M_A  \rme^{- \frac{(y-A)^2}{2}},  \end{align}
    where in the last inequality we have that $M_A =  \max_{ y \in \bbR}f_{X^*+Z}(y) $.
Mirroring the same argument for $y<-A$, we arrive at the desired bound in (P$_3$). 

We now prove (P$_4$). Starting with the expression of the wrapped density in \eqref{eq:wrapped_density_pdf} 
\begin{align}
    f_{\langle X^*+Z \rangle_A}(\theta) &= \frac{A}{\pi} \sum_{k \in \mathbb{Z}} f_{X^*+Z}\left( \frac{A}{\pi}(\theta + 2\pi k) \right) \\
    &= \frac{A}{\pi}f_{X^*+Z}\left( \frac{A}{\pi}\theta  \right)  + \frac{A}{\pi} \sum_{k \in \mathbb{Z}, k \neq 0} f_{X^*+Z}\left( \frac{A}{\pi}(\theta + 2\pi k) \right)\\
    & \le  {\frac{A}{\pi} \frac{\rme}{ \sqrt{2 \pi \rme + 4 A^2}}\left(1 + \sum_{k \in \mathbb{Z}, k \neq 0} \rme^{- \frac{A^2 \left(  \left| \frac{\theta}{\pi}+2k \right|-1 \right)^2}{2}}\right)} \label{eq:boundsing_fold_with_tail_bounds}\\
    & =  {\frac{A}{\pi} \frac{\rme}{ \sqrt{2 \pi \rme + 4 A^2}}\left(1 + \sum_{m \in \mathbb{Z}} \rme^{- \frac{A^2 \left( 2m+1-\frac{\theta}{\pi} \right)^2}{2}}\right)} \label{eq:triangle_inequality_bound_exponent} \\
    & \le  {\frac{A}{\pi} \frac{\rme}{ \sqrt{2 \pi \rme + 4 A^2}}\left(2 + 2\sum_{m=1}^{\infty} \rme^{-2A^2m^2}\right)} \label{eq:bounding_gaussian_sum}\\
    & = {\frac{A}{\pi} \frac{2\rme}{ \sqrt{2 \pi \rme + 4 A^2}}\left(1+\sum_{m=1}^{\infty} \rme^{-2A^2m^2}\right)} \\
    & \le {\frac{\rme}{\pi}},
\end{align}
     {Here \eqref{eq:boundsing_fold_with_tail_bounds} follows from \eqref{eq:bound_on_dencity_Y} and \eqref{eq:Tail_bound_beyond_y>A}. To obtain \eqref{eq:triangle_inequality_bound_exponent}, set $t=\frac{\theta}{\pi}\in[-1,1)$ and reindex the tail terms by odd integers. For \eqref{eq:bounding_gaussian_sum}, write $u=\frac{1-t}{2}$ so that the bracket becomes $1+\sum_{m\in\mathbb Z}\exp(-2A^2(m+u)^2)$. By Poisson summation,
\[
\sum_{m\in\mathbb Z}\exp(-2A^2(m+u)^2)
=
\frac{\sqrt{\pi}}{\sqrt{2}A}
\left(1+2\sum_{n=1}^{\infty}\rme^{-\frac{\pi^2n^2}{2A^2}}\cos(2\pi n u)\right),
\]
which is maximized at $u=0$; equivalently, the bracket is largest at $\theta=\pm\pi$ and equals $2+2\sum_{m=1}^{\infty} e^{-2A^2m^2}$. 

To conclude, we split into two cases. If $A\ge 1$, then $m^2 \ge m$ for $m\ge 1$ gives
\[
\sum_{m=1}^{\infty} \rme^{-2A^2m^2}
\le
\sum_{m=1}^{\infty} \rme^{-2A^2m}
=
\frac{\rme^{-2A^2}}{1-\rme^{-2A^2}},
\]
so
\[
\frac{A}{\pi} \frac{2\rme}{ \sqrt{2 \pi \rme + 4 A^2}}\left(1+\sum_{m=1}^{\infty} \rme^{-2A^2m^2}\right)
\le
\frac{2\rme A}{\pi \sqrt{2 \pi \rme + 4 A^2}}\frac{1}{1-\rme^{-2A^2}}.
\]
Using $1-\rme^{-x} \ge \frac{x}{1+x}$ for $x\ge 0$, we obtain
\[
\frac{1}{1-\rme^{-2A^2}} \le \frac{1+2A^2}{2A^2},
\]
hence
\[
\frac{2\rme A}{\pi \sqrt{2 \pi \rme + 4 A^2}}\frac{1}{1-\rme^{-2A^2}}
\le
\frac{\rme}{\pi}\frac{1+2A^2}{A\sqrt{2\pi \rme+4A^2}}
\le
\frac{\rme}{\pi},
\]
where the last step follows from $(1+2A^2)^2 \le A^2(2\pi\rme+4A^2)$ for $A\ge 1$.

If $0<A\le 1$, then the function $x\mapsto \rme^{-2A^2x^2}$ is decreasing on $[0,\infty)$, so
\[
\sum_{m=1}^{\infty} \rme^{-2A^2m^2}
\le
\int_{0}^{\infty}\rme^{-2A^2x^2}\,\rmd x
=
\frac{\sqrt{\pi}}{2\sqrt{2}A}.
\]
Therefore
\[
\frac{A}{\pi} \frac{2\rme}{ \sqrt{2 \pi \rme + 4 A^2}}\left(1+\sum_{m=1}^{\infty} \rme^{-2A^2m^2}\right)
\le
\frac{2\rme A+\rme\sqrt{\pi/2}}{\pi\sqrt{2\pi\rme+4A^2}}.
\]
Finally, since $A\le 1$,
\[
\left(2A+\sqrt{\pi/2}\right)^2
=
4A^2+4A\sqrt{\pi/2}+\pi/2
\le
4A^2+4\sqrt{\pi/2}+\pi/2
<
4A^2+2\pi\rme,
\]
which implies
\[
\frac{2\rme A+\rme\sqrt{\pi/2}}{\pi\sqrt{2\pi\rme+4A^2}}
\le
\frac{\rme}{\pi}.
\]

This concludes the proof of (P$_4$) and of the proposition.}

\end{proof}

\bibliography{refs.bib}

@STRING{BSTJ="Bell Syst. Tech. J."}

@STRING{ProcITW="Proc. IEEE Inf. Theory Workshop"}

@STRING{ProcCISS="Proc. Conf. on Inf. Sci. and Sys."}

@STRING{ProcISIT="Proc. IEEE Int. Symp. Inform. Theory"}

@STRING{TransIT="IEEE Trans. Inf. Theory"}

@STRING{ProcISIT="Proc. IEEE Int. Symp. Inf. Theory"}

@Article{Shannon:1948,
author = { C. E. Shannon},
title = {A mathematical theory of communication},
journal = BSTJ,
year = {1948},
OPTkey = {},
volume = {27},
number = {379-423, 623-656},
OPTpages = {},
OPTnote = {},
OPTannote = {}
}

@phdthesis{smith1969Thesis,
  title={On the {I}nformation {C}apacity of {P}eak and {A}verage {P}ower {C}onstrained {G}aussian {C}hannels},
  school={University of California},
  author={Smith, Joel Gorham},
  year={1969},
  type={{PhD} dissertation},
}

@article{casella1981estimating,
  title={Estimating a bounded normal mean},
  author={Casella, George and Strawderman, William E},
  journal={Ann. Statist.},
  pages={870--878},
  year={1981},
  publisher={JSTOR}
}

@inproceedings{CISSdytso2018, 
author={A. Dytso and M. Goldenbaum and H. V. Poor and S. {Shamai (Shitz)}}, 
booktitle={Proc. IEEE  52nd Annu. Conf.  Info.  Scien. and Syst. (CISS)}, 
title={When are discrete channel inputs optimal?: Optimization techniques and some new results}, 
volume={}, 
number={}, 
pages={1-6}, 
address={Princeton, USA}, 
year={2018}, 
}

@article{shamai1990capacity,
  title={Capacity of a pulse amplitude modulated direct detection photon channel},
  author={{Shamai (Shitz)}, S },
  journal={IEE Proceedings I (Communications, Speech and Vision)},
  volume={137},
  number={6},
  pages={424--430},
  year={1990},
  publisher={IET}
}

@article{abou2001capacity,
  title={The capacity of discrete-time memoryless {R}ayleigh-fading channels},
  author={Abou-Faycal, Ibrahim C and Trott, Mitchell D and {Shamai (Shitz)}, Shlomo},
  journal=TransIT,
  volume={47},
  number={4},
  pages={1290--1301},
  year={2001},
  publisher={IEEE} 
}

@article{ShamQuadrat, 
author={Shamai (Shitz), Shlomo and Bar-David, Israel}, 
journal=TransIT, 
title={The capacity of average and peak-power-limited quadrature {Gaussian} channels}, 
year={1995},
volume={41}, 
number={4}, 
pages={1060--1071}, 
}

@article{I-MMSE, 
author={Dongning Guo and {Shamai (Shitz)}, S. and Verd{\'u}, S.}, 
journal=TransIT, 
title={Mutual information and minimum mean-square error in {G}aussian channels}, 
year={2005}, 
volume={51}, 
number={4}, 
pages={1261-1282}, 
}

@article{lapidoth2003capacity,
  title={Capacity bounds via duality with applications to multiple-antenna systems on flat-fading channels},
  author={Lapidoth, Amos and Moser, Stefan M},
  journal=TransIT,
  volume={49},
  number={10},
  pages={2426--2467},
  year={2003},
  publisher={IEEE}
}

@article{thangaraj2017capacity, 
author={A. Thangaraj and G. Kramer and G. B{\" o}cherer}, 
journal=TransIT,
title={Capacity Bounds for Discrete-Time, Amplitude-Constrained, Additive White {G}aussian Noise Channels}, 
year={2017}, 
volume={63}, 
number={7}, 
pages={4172-4182}, 
}

@inproceedings{mckellips2004simple,
  title={Simple tight bounds on capacity for the peak-limited discrete-time channel},
  author={McKellips, Andrew L.},
  booktitle=ProcISIT,
  address={Chicago, IL},
  pages={348--348},
  year={2004},
}

@book{CoverInfoTheory,
   author = {Thomas Cover and Joy Thomas},
   title= {Elements of Information Theory: Second Edition},
   publisher = {Wiley},
   year = {2006}
}

@article{sharma2010transition,
  title={Transition points in the capacity-achieving distribution for the peak-power limited {AWGN} and free-space optical intensity channels},
  author={Sharma, Naresh and Shamai (Shitz), Shlomo},
  journal={Probl. Inf. Transm.},
  volume={46},
  number={4},
  pages={283--299},
  year={2010},
  publisher={Springer}
}

@article{smith1971information,
  title={The information capacity of amplitude- and variance-constrained scalar {G}aussian channels},
  author={Smith, Joel G},
  journal={Inform. and Contr.},
  volume={18},
  number={3},
  pages={203--219},
  year={1971},
  publisher={Elsevier}
}

@article{rassouli2016capacity,
  title={On the capacity of vector {Gaussian} channels with bounded inputs},
  author={Rassouli, Borzoo and Clerckx, Bruno},
  journal=TransIT,
  volume={62},
  number={12},
  pages={6884--6903},
  year={2016},
  month={December},
  publisher={IEEE}
}

@inproceedings{CISS2018,
  title={When Are Discrete Channel Inputs Optimal? - {O}ptimization Techniques and Some New Results},
  author={Dytso, Alex and Goldenbaum, Mario and Poor, H. Vincent and {Shamai (Shitz)}, Shlomo},
  booktitle=ProcCISS,
  address = {Princeton,  NJ, USA},
  month = {March},
  pages={1-6},
  year={2018},
}

@inproceedings{karlin1956Polya1,
address = {Berkeley, Calif.},
author = {Karlin, Samuel},
booktitle = {Proceedings of the Third Berkeley Symposium on Mathematical Statistics and Probability, Volume 1: Contributions to the Theory of Statistics},
pages = {115--128},
publisher = {University of California Press},
title = {Decision Theory for {P\'olya} Type Distributions. Case of Two Actions, I},
year = {1956}
}

@article{verdu1998fifty,
  title={Fifty years of {S}hannon theory},
  author={Verd\'u, Sergio},
  journal=TransIT,
  volume={44},
  number={6},
  pages={2057--2078},
  year={1998},
  publisher={Citeseer}
}

@article{mamandipoor2014capacity,
  title={Capacity-achieving distributions in {G}aussian multiple access channel with peak power constraints},
  author={Mamandipoor, Babak and Moshksar, Kamyar and Khandani, Amir K},
  journal=TransIT,
  volume={60},
  number={10},
  pages={6080--6092},
  year={2014},
  publisher={IEEE}
}

@phdthesis{zhang1994discrete,
  author       = {Zhongxin Zhang},
  title        = {Discrete Noninformative Priors},
  school       = {Yale University},
  address      = {New Haven, CT},
  year         = {1994},
}

@article{wang2025super,
  title={Super-Linear Growth of the Capacity-Achieving Input Support for the Amplitude-Constrained {AWGN} Channel},
  author={Wang, Haiyang},
  journal={arXiv preprint arXiv:2510.20723},
  year={2025}
}

@article{dytso2022conditional,
  title={Conditional mean estimation in {G}aussian noise: {A} meta derivative identity with applications},
  author={Dytso, Alex and Poor, H Vincent and {Shamai (Shitz)}, Shlomo },
  journal=TransIT,
  volume={69},
  number={3},
  pages={1883--1898},
  year={2022},
  publisher={IEEE}
}

@article{dytso2017view,
  title={A view of information-estimation relations in {G}aussian networks},
  author={Dytso, Alex and Bustin, Ronit and Poor, H Vincent and {Shamai (Shitz)}, Shlomo},
  journal={Entropy},
  volume={19},
  number={8},
  pages={409},
  year={2017},
  publisher={MDPI}
}

@article{dytso2019amplitude,
  title={Amplitude constrained {MIMO} channels: {P}roperties of optimal input distributions and bounds on the capacity},
  author={Dytso, Alex and Goldenbaum, Mario and Poor, H Vincent and {Shamai (Shitz)}, Shlomo},
  journal={Entropy},
  volume={21},
  number={2},
  pages={200},
  year={2019},
  publisher={MDPI}
}

@INPROCEEDINGS{2by2MIMO,
  author={Dytso, Alex and Barletta, Luca and Kramer, Gerhard},
  booktitle=ProcISIT, 
  title={On $2\times 2$ {MIMO} {G}aussian Channels with a Small Discrete-Time Peak-Power Constraint}, 
  year={2024},
  volume={},
  number={},
  pages={2365-2370},
  keywords={Sufficient conditions;Transmitting antennas;Receiving antennas;Gaussian channels;Information theory},
  doi={10.1109/ISIT57864.2024.10619699}}

@ARTICLE{RayleighChannelCardBounds,
  author={Favano, Antonino and Barletta, Luca and Dytso, Alex and Kramer, Gerhard},
  journal=TransIT,  
  title={Non-Coherent {R}ayleigh Fading Channels: {P}roperties of the Capacity-Achieving Input}, 
  year={2025},
  volume={71},
  number={7},
  pages={5258-5276},
  keywords={Oscillators;Rayleigh channels;Random variables;Channel models;AWGN channels;Receivers;Lower bound;Channel capacity;Wireless networks;Vectors;Channel capacity;Rayleigh fading;peak-power constraint;non-coherent fading},
  doi={10.1109/TIT.2025.3565653}}

@INPROCEEDINGS{BinomialChannel,
  author={Barletta, Luca and Zieder, Ian and Favano, Antonino and Dytso, Alex},
  booktitle=ProcISIT, 
  title={Binomial Channel: {O}n the Capacity-Achieving Distribution and Bounds on the Capacity}, 
  year={2024},
  volume={},
  number={},
  pages={711-716},
  keywords={Upper bound;Noise;Kernel;Information theory;The second part of the paper focuses on deriving upper and lower bounds on capacity. In particular;firm bounds are established for all n that show that the capacity scales as log (n)},
  doi={10.1109/ISIT57864.2024.10619601}}

@ARTICLE{PoissonCardBounds,
  author={Dytso, Alex and Barletta, Luca and {Shamai (Shitz)}, Shlomo},
  journal=TransIT,  
  title={Properties of the Support of the Capacity-Achieving Distribution of the Amplitude-Constrained {P}oisson Noise Channel}, 
  year={2021},
  volume={67},
  number={11},
  pages={7050-7066},
  keywords={Dark current;Upper bound;Tools;Random variables;Photonics;Oscillators;Kernel;Amplitude constraint;Poisson noise channel;optical communications;capacity;discrete distributions;strong data-processing inequality},
  doi={10.1109/TIT.2021.3111836}}

@ARTICLE{GaussianBoundsCard,
  author={Dytso, Alex and Yagli, Semih and Poor, H. Vincent and {Shamai (Shitz)}, Shlomo},
  journal=TransIT,  
  title={The Capacity Achieving Distribution for the Amplitude Constrained Additive {G}aussian Channel: {A}n Upper Bound on the Number of Mass Points}, 
  year={2020},
  volume={66},
  number={4},
  pages={2006-2022},
  keywords={Upper bound;Entropy;Probability density function;Additives;Random variables;Oscillators;Gaussian noise;Amplitude constraint;power constraint;additive vector Gaussian noise channel;capacity;discrete distributions},
  doi={10.1109/TIT.2019.2948636}}

@ARTICLE{VectorGaussianEstimationPerspect,
  author={Dytso, Alex and Al, Mert and Poor, H. Vincent and {Shamai (Shitz)}, Shlomo},
  journal=TransIT,  
  title={On the Capacity of the Peak Power Constrained Vector {G}aussian Channel: An Estimation Theoretic Perspective}, 
  year={2019},
  volume={65},
  number={6},
  pages={3907-3921},
  keywords={Gaussian channels;Estimation theory;Mutual information;Mean square error methods;Gaussian noise;Channel capacity;Optimization;Probability;Capacity;mutual information;minimum mean square error (MMSE);I-MMSE;peak-power;amplitude constraint;harmonic functions},
  doi={10.1109/TIT.2018.2890208}}

@article{tchamkerten2004discreteness,
  title={On the discreteness of capacity-achieving distributions},
  author={Tchamkerten, Aslan},
  journal=TransIT,
    volume={50},
  number={11},
  pages={2773--2778},
  year={2004},
  publisher={IEEE}
}

@article{fahs2017properties,
  title={On properties of the support of capacity-achieving distributions for additive noise channel models with input cost constraints},
  author={Fahs, Jihad and Abou-Faycal, Ibrahim},
  journal=TransIT, 
  volume={64},
  number={2},
  pages={1178--1198},
  year={2017},
  publisher={IEEE}
}

@article{chan2005capacity,
  title={Capacity-achieving probability measure for conditionally {G}aussian channels with bounded inputs},
  author={Chan, Terence H and Hranilovic, Steve and Kschischang, Frank R},
  journal=TransIT, 
  volume={51},
  number={6},
  pages={2073--2088},
  year={2005},
  publisher={IEEE}
}

@article{katz2004capacity,
  title={On the capacity-achieving distribution of the discrete-time noncoherent and partially coherent {AWGN} channels},
  author={Katz, Michael and {Shamai (Shitz)}, Shlomo},
  journal=TransIT,
  volume={50},
  number={10},
  pages={2257--2270},
  year={2004},
  publisher={IEEE}
}

@INPROCEEDINGS{wireatap_bounds_card,
  author={Dytso, Alex and Egan, Malcolm and Perlaza, Samir M. and Poor, H. Vincent and {Shamai (Shitz)}, Shlomo},
  booktitle=ProcITW, 
  title={Optimal Inputs for Some Classes of Degraded Wiretap Channels}, 
  year={2018},
  volume={},
  number={},
  pages={1-5},
  keywords={Optical wavelength conversion;Gaussian noise;Markov processes;Kernel;Convex functions;Receivers;Additive noise},
  doi={10.1109/ITW.2018.8613368}}

@article{ozel2015gaussian,
  title={Gaussian wiretap channel with amplitude and variance constraints},
  author={Ozel, Omur and Ekrem, Ersen and Ulukus, Sennur},
  journal=TransIT, 
  volume={61},
  number={10},
  pages={5553--5563},
  year={2015},
  publisher={IEEE}
}

@article{mattingly2018maximizing,
  title={Maximizing the information learned from finite data selects a simple model},
  author={Mattingly, Henry H and Transtrum, Mark K and Abbott, Michael C and Machta, Benjamin B},
  journal={Proceedings of the National Academy of Sciences},
  volume={115},
  number={8},
  pages={1760--1765},
  year={2018},
  publisher={National Academy of Sciences}
}

@article{abbott2019scaling,
  title={A scaling law from discrete to continuous solutions of channel capacity problems in the low-noise limit},
  author={Abbott, Michael C and Machta, Benjamin B},
  journal={Journal of Statistical Physics},
  volume={176},
  number={1},
  pages={214--227},
  year={2019},
  publisher={Springer}
}

@article{blahut2003computation,
  title={Computation of channel capacity and rate-distortion functions},
  author={Blahut, Richard},
  journal=TransIT, 
  volume={18},
  number={4},
  pages={460--473},
  year={1972},
  publisher={IEEE}
}

@article{arimoto1972algorithm,
  title={An algorithm for computing the capacity of arbitrary discrete memoryless channels},
  author={Arimoto, Suguru},
  journal=TransIT, 
  volume={18},
  number={1},
  pages={14--20},
  year={1972},
  publisher={IEEE}
}

@ARTICLE{BestApporximationGaussianMixture,
  author={Ma, Yun and Wu, Yihong and Yang, Pengkun},
  journal=TransIT, 
  title={On the Best Approximation by Finite {G}aussian Mixtures}, 
  year={2025},
  volume={71},
  number={7},
  pages={5469-5492},
  keywords={Lower bound;Upper bound;Complexity theory;Tail;Entropy;Approximation error;Gaussian channels;Convergence;Channel capacity;Training;Gaussian mixture;approximation error;channel capacity;moment matrix;orthogonal polynomials},
  doi={10.1109/TIT.2025.3558841}}

@book{polyanskiyInformationTheoryCoding2025,
  title = {Information Theory: From Coding to Learning},
  shorttitle = {Information Theory},
  author = {Polyanskiy, Yury and Wu, Yihong},
  year = 2025,
  publisher = {Cambridge University Press},
  address = {Cambridge, United Kingdom; New York, NY},
  abstract = {"This enthusiastic introduction to the fundamentals of information theory builds from classical Shannon theory through to modern applications in statistical learning. Includes over 210 student exercises, emphasising practical applications in statistics, machine learning and modern communication theory. Accompanied by online instructor solutions"--},
  isbn = {978-1-108-83290-8},
  lccn = {Q360 .P65 2025},
  keywords = {Information theory,Textbooks},
  file = {/Users/haiy/Zotero/storage/J2HD7BA8/Polyanskiy and Wu - 2025 - Information theory from coding to learning.pdf}
}

@inproceedings{kemperman1974shannon,
  title={On the {S}hannon capacity of an arbitrary channel},
  author={Kemperman, JHB},
  booktitle={Indagationes Mathematicae (Proceedings)},
  volume={77},
  number={2},
  pages={101--115},
  year={1974},
  organization={North-Holland}
}

@article{topsoe1967information,
  title={An information theoretical identity and a problem involving capacity},
  author={Tops{\o}e, Flemming},
  journal={Studia Scientiarum Mathematicarum Hungarica},
  volume={2},
  number={291-292},
  pages={246},
  year={1967}
}

@book{csiszar2011information,
  title={Information Theory: Coding Theorems for Discrete Memoryless Systems},
  author={Csisz{\'a}r, Imre and K{\"o}rner, J{\'a}nos},
  year={2011},
  publisher={Cambridge University Press}
}

@article{sason2016f,
  title={$ f $-divergence Inequalities},
  author={Sason, Igal and Verd{\'u}, Sergio},
  journal=TransIT,
  volume={62},
  number={11},
  pages={5973--6006},
  year={2016},
  publisher={IEEE}
}

@article{Mirsky1960,
  author = {L. Mirsky},
  title  = {Symmetric Gauge Functions and Unitarily Invariant Norms},
  journal= {Quarterly Journal of Mathematics},
  year   = {1960},
  volume = {11},
  pages  = {50--59}
}

@ARTICLE{hatsell_nolte,
  author={Hatsell, C. and Nolte, L.},
  journal={IEEE Transactions on Information Theory}, 
  title={Some geometric properties of the likelihood ratio (Corresp.)}, 
  year={1971},
  volume={17},
  number={5},
  pages={616-618},
  keywords={Vectors;Gaussian noise;Estimation;Sufficient conditions;Signal detection;Correlation;Calculus;Recursive estimation;Random variables;Probability distribution},
  doi={10.1109/TIT.1971.1054672}}

@INPROCEEDINGS{LeastFavorableDytso,
  author={Dytso, Alex and Poor, H. Vincent and Bustin, Ronit and Shamai, Shlomo},
  booktitle={2018 IEEE International Symposium on Information Theory (ISIT)}, 
  title={On the Structure of the Least Favorable Prior Distributions}, 
  year={2018},
  volume={},
  number={},
  pages={1081-1085},
  keywords={Optimization;Information theory;Convex functions;Estimation;MIMO communication;Gaussian noise;Tools},
  doi={10.1109/ISIT.2018.8437546}}

@article{berry1990minimax,
  title={Minimax estimation of a bounded normal mean vector},
  author={Berry, J Calvin},
  journal={Journal of Multivariate Analysis},
  volume={35},
  number={1},
  pages={130--139},
  year={1990},
  publisher={Elsevier}
}
\bibliographystyle{IEEEtran}

\end{document}